\newcommand{\nmapsto}{\not\mapsto}
\renewcommand{\appendixprelim}[1]{%
}
\newcommand{\dstechnical}{\includegraphics[width=0.9em]{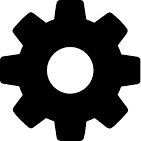}}
\newcommand{\SEM}[2]{\llbracket #1 \rrbracket #2}
\newcommand{\True}{\mathit{true}}
\newcommand{\False}{\mathit{false}}
\newcommand{\Val}{v}
\newcommand{\Vals}{\mathcal{V}}
\newcommand*{\Bools}{\texttt{bool}\xspace}
\newcommand*{\Ints}{\texttt{int}\xspace}
\newcommand*{\Reals}{\texttt{real}\xspace}
\newcommand*{\LTLs}{\texttt{ltl\_t}\xspace}
\newcommand*{\Z}{\mathbb{Z}}
\newcommand*{\R}{\mathbb{R}}
\DeclareMathOperator{\dom}{dom}
\newcommand*{\HOAone}{\textsc{HOA}\xspace}
\newcommand*{\HOApp}{\textsc{HOApp}\xspace}
\newcommand{\SET}{\bfseries\raisebox{.5pt}{\textcircled{\raisebox{-.9pt} {\footnotesize\textsf 0}}}}
\title{Symbolic \texorpdfstring{$\omega$}{Omega}-Automata with Obligations}
\titlerunning{Symbolic $\omega$-Automata with Obligations}
\author{Luca {Di Stefano}\orcidID{0000-0003-1922-3151}}
\authorrunning{L. {Di Stefano}}
\institute{%
TU Wien, Institute of Computer Engineering, Treitlstraße 3, 1040 Vienna, Austria\\
\email{luca.di.stefano@tuwien.ac.at}%
}
        \edef\pgfmathresult{\pgfmathresult pt}%
\begin{document}

\maketitle

\begin{abstract}
Extensions of $\omega$-automata to infinite alphabets typically
rely on symbolic guards to keep the transition relation finite,
and on registers or memory cells to preserve information from past symbols.
Symbolic transitions alone are ill-suited to act on this information,
and register automata have intricate formal semantics and issues with tractability.
We propose a slightly different approach based on \emph{obligations},
i.e., assignment-like constructs attached to transitions.
Whenever a transition with an obligation is taken, the obligation is evaluated on the \emph{current} symbol and yields a constraint on the \emph{next} symbol that the automaton will read.

We formalize obligation automata with existential and universal branching
and Emerson-Lei acceptance conditions, which subsume classic families such as B\"uchi, Rabin, Strett, and parity
automata.
We show that these automata recognise a strict superset of $\omega$-regular languages.

To illustrate the practicality of our proposal, we also introduce a machine-readable format to express
obligation automata and describe a tool implementing several operations over them,
including automata product and emptiness checking.

\end{abstract}

\section{Introduction}
Automata over infinite words (or $\omega$-automata) are a cornerstone of
theoretical computer science, as they enable representing and reasoning about
computing systems, protocols, linear temporal properties, and game structures;
thus, they are widely applied in domains ranging from formal verification,
to logical satisfiability checking, to reactive synthesis.
However, these automata only operate on finite alphabets and thus
are ill-suited to contexts that involve infinite-state domains,
which are becoming extremely relevant to the research community of today~\cite{DBLP:conf/cav/AzzopardiSPS25,DBLP:journals/pacmpl/HeimD24,DBLP:conf/fmcad/MaderbacherB22}.
\emph{Symbolic automata} represent a first step towards handling infinite
alphabets.
Transitions in a symbolic automaton are not guarded by mere symbols,
but rather by predicates from an alphabet theory.
This allows each single symbolic transition to handle a possibly
infinite set of symbols.
However, the automaton's finite state space leads to an inevitable
limitation:
after reading one of infinitely many symbols, the automaton must pick a
successor state from a finite set of choices.
By the pigeonhole principle, there will be
infinitely many symbols for which the automaton chooses the
same successor state, losing some information in the process.
Fully symbolic models, such as those expressible in nuXmv~\cite{DBLP:conf/cav/CavadaCDGMMMRT14}
or MoXi~\cite{DBLP:conf/cav/JohannsenNDISTVR24},
overcome this limitation by also treating the (now infinite) state space
symbolically.
This begs the question: can we seize at least some of the expressive power of
these symbolic models while retaining a finite state space,
and thus the rich analytical toolbox of $\omega$-automata theory?
Previous work, discussed below, answered through
\emph{register automata}~\cite{DBLP:journals/tcs/KaminskiF94} that are able to store and recall data from memory cells.
This framework yields complicated semantics and highly complex
decision problems on infinite words.

In this work, we suggest an alternative approach by introducing \emph{obligation automata}.
An obligation automaton is a symbolic automaton: it reads
valuations of variables from a (finite) set $V$,
and its transitions are guarded by predicates over said set.
In addition, a transition may be labelled by an \emph{obligation}.
If such a transition is taken, the obligation constrains the
\emph{next} valuations that the automaton can read, based on the
\emph{current} valuation. This construct is reminiscent of
how  languages such as
SMV~\cite{DBLP:conf/cav/ClarkeMCH96}
and BTOR~\cite{DBLP:conf/cav/NiemetzPWB18}
use \texttt{next($\cdot$)} to declare symbolic transition relations;
its simplicity lets us easily define formal semantics for Emerson-Lei
acceptance conditions with existential and universal branching.
As we will see, obligations are a direct generalisation
of $\omega$-automata (Theorem~\ref{thm:fragment}) and recognise a superset of $\omega$-regular languages (Theorem~\ref{thm:finite}).

On the practical side,
we also contribute a dialect of the
popular Hanoi Omega-Automata format (HOA)~\cite{DBLP:conf/cav/BabiakBDKKM0S15}
to express obligation automata over integer and real variables.
We call this format ``HOA plus plus'' (\HOApp).
Lastly, we introduce a prototype tool that can validate and perform several useful operations
on \HOApp automata. These include computing the product of two automata, translating an LTL formula
with predicates into an automaton, and checking the emptiness of an automaton via a reduction to model checking.

This work is structured as follows.
{
Section~\ref{sec:oblaut} introduces obligation automata and their semantics.
In Section~\ref{sec:hoapp} we describe our proposed dialect of HOA for
obligation automata, while
Section~\ref{sec:tool} describes the capabilities of our prototype tool to handle this format.
Lastly, Section~\ref{sec:conclusion} contains our concluding remarks.}

\noindent\emph{Related work.}
The common approach to model memory in automata or temporal logics is
to introduce the concept of a \emph{register} or \emph{memory cell}.
This approach is seen, for instance, in register automata (RA), also known as finite-memory automata~\cite{DBLP:journals/tcs/KaminskiF94},
and Temporal Stream Logic (TSL)~\cite{DBLP:conf/cav/Finkbeiner0PS19}.
In the finite-state setting, AIGER~\cite{biere_aiger_2007} also supports
registers (latches).
While arguably intuitive, this approach introduces a conceptual divide between the automaton's (or the formula's)
alphabet and its memory, resulting in intricate formal semantics.
For instance, register reset logic needs to be specified explicitly (e.g.,~\cite{biere_aiger_2011}).
RA can be constructed from Freeze LTL~\cite{DBLP:conf/lics/DemriL06}, which extends LTL~\cite{DBLP:conf/focs/Pnueli77} with
quantifiers to store and recall values in registers:
this implies that several decision problems for RA over infinite words are either
extremely complex or undecidable.
Similarly, satisfiability of TSL modulo theories is based on \emph{B\"uchi stream automata},
similar to RA but with support for rich expressions in updates,
and is also proven undecidable~\cite{DBLP:conf/fossacs/FinkbeinerHP22}.
The language for the Issy synthesis tool~\cite{DBLP:conf/cav/HeimD25}
supports game structures with next-state constraints that resemble obligations,
but it does not feature semantics for alternation.

\newcommand*{\Preds}{\mathcal{P}\!r}

\section{Obligation automata}\label{sec:oblaut}

\noindent\emph{Preliminary definitions.}
A \emph{theory} $T_V$ is a fixed interpretation of
\emph{terms} and \emph{predicates}, constructed from
\emph{variables} (taken from a set $V$) and \emph{constants}.
Let $\mathcal{T}(V)$ be the set of terms of a theory,
with free variables in $V$, and let
$\Preds(V)$ the set of predicates over $\mathcal{T}(V)$.
A partial (r. total) \emph{valuation} is a partial (r. total) mapping from variables to constant terms.
We denote by $\SEM{z}{v}$ the evaluation within $T_V$ of a term or predicate $z$
against a total valuation $v$, and by $\Vals$ the set of all total valuations.
When $p$ is a predicate, $v \models p$ is shorthand for $\SEM{p}{v} = \mathit{true}$.

An \emph{obligation} is a partial, possibly empty mapping from $V$ to terms in $\mathcal{T}(V)$.
For an obligation $o = \{x_1 \mapsto t_1, \ldots, x_n \mapsto t_n \}$,
with a slight abuse of notation
we will use $\SEM{o}{v}$ to denote the
predicate $\bigwedge_i (x_i = \SEM{t_i}{v})$.
For the empty obligation $\varepsilon$ we define $\SEM{\varepsilon}{v} = \mathit{true}$,
for every $v$.

\noindent\emph{Obligation automata with existential branching.}
An obligation automaton (modulo $T_V$) is a tuple
$\mathcal{A} = \langle V, Q, R, I, F, \mathit{Acc} \rangle$,
with $V$ the set of variables,
$Q$ a finite set of \emph{states},
$R$ an edge relation,
$I \subseteq Q$ a set of \emph{initial states},
$F \subseteq 2^R$ a set of (transition-based) \emph{acceptance sets},
and $\mathit{Acc}$ an \emph{acceptance condition}.
We assume that sets in $F$ are non-empty and
and numbered $F_1, \ldots, F_n$.

$R$ is a set of tuples $(q, \gamma, o, q')$
where $q, q' \in Q$ are respectively the edge's \emph{source} and \emph{target},
$\gamma \in \Preds(V)$ is its \emph{guard},
and $o \in (V \hookrightarrow \mathcal{T}(V))$ its \emph{obligation}.
Intuitively, such an edge indicates that,
when the automaton is in state $q$,
it can read a valuation $v$ such that
$v \models \gamma$ and then transition to state $q'$.
In this state, the additional constraint $\SEM{o}{v}$ will have to hold against the \emph{next} valuation read by $\mathcal{A}$.
We may also denote an edge by $q\xrightarrow{\gamma, o} q'$,
and use $R_q$ to denote the set of edges leaving $q$.

An Emerson-Lei \emph{acceptance condition} over acceptance sets $F$
is a term from the following grammar~\cite{DBLP:journals/scp/EmersonL87}:
\[
\zeta \Coloneqq
\mathit{true}
\mid \mathit{false}
\mid \mathrm{Fin}(k)
\mid \mathrm{Inf}(k)
\mid \zeta \land \zeta
\mid \zeta \lor \zeta,
\quad 1 \leq k \leq |F|
\]

Intuitively,
$\mathrm{Fin}(k)$ is only satisfied if the automaton
takes every transition from $F_k$ at most a finite number of
times; otherwise, $\mathrm{Inf}(k)$ is satisfied.
(As is commonplace for $\omega$-automata, whenever we mention 
that a state $q$ is ``accepting'', we merely mean that all edges in $R_q$
belong to some accepting set.)

\noindent\emph{Words, runs, acceptance, languages.}
A \emph{word} is a sequence of valuations $\mathbf{v} = v_1v_2\ldots \in \Vals^\omega$.
A \emph{run} over a word in an automaton $\mathcal{A}$, if it exists, is a sequence of edges
$e_i = (q_i, \gamma_i, o_i, q'_i)$, $i \in \mathbb{N}$,
such that $q_1 \in I$, and for every $i$:
\begin{enumerate*}
\item $q'_i = q_{i+1}$;
\item $v_i \models \gamma_i$; and
\item $v_{i+1} \models \SEM{o_i}v_i$.
\end{enumerate*}
Given an acceptance set $F_k$, a run $e_1e_2\ldots$ is \emph{accepting} for $\mathrm{Fin}(k)$ iff the set $\{ i \mid e_i \in F_k\}$ is finite;
otherwise, it is accepting for $\mathrm{Inf}(k)$.
A run is accepting for a conjunction (r. disjunction) of conditions iff it is accepting for all (r. at least one) of them.
Every run is accepting for $\mathit{true}$, and no run is accepting for $\mathit{false}$.
The \emph{language} of $\mathcal{A}$, $\mathcal{L}(\mathcal{A})$, is the set of words over which there exists an accepting run for $\mathit{Acc}$.
$\mathcal{A}$ is \emph{empty} iff $\mathcal{L}(\mathcal{A})$ is the empty set.

\noindent\emph{Product of obligations.}
Given two obligations $o_1, o_2$, we now want to introduce a
product operator $o_1 \times o_2$ to  formalise
the meaning of a transition that ``combines'' both obligations.
We will need this to formalise other concepts, including
the product of two automata and the semantics of universal branching.

When the domains of $o_1$ and $o_2$ are disjoint,
we can simply take the union of the two mappings;
however, in general the domains overlap.
In the simplest case,
we have an edge $(q, \gamma, o_1\times o_2, q')$
where $o_1, o_2$ map a single variable $x$ to two terms
$t_1$, $t_2$.
We choose to treat an edge like this as
equivalent to the edge $(q, \gamma \land t_1 = t_2, \{ x \mapsto t_1 \}, q')$.
That is, we reduce the obligation to $x \mapsto t_1$,
but also force the \emph{current} valuation to
satisfy $t_1 = t_2$, since this is the only way
to enforce that the \emph{next} symbol will
also obey $x \mapsto t_2$.
To generalise, every edge $(q, \gamma, o_1 \times o_2, q')$
is treated as the edge
$(q, \gamma \land \gamma', o', q')$,
where
\[
\begin{aligned}
\gamma' &= \bigwedge \{ t_{1j} = t_{2j} \mid  x_{ij} \in \dom(o_1) \cap \dom(o_2)\},& \text{with } i=1,2, \text{ and}\\
o' &= \{x_{ij} \mapsto t_{ij} \mid x_{ij} \in \dom(o_1) \cup (\dom(o_2)\setminus \dom(o_1)) \},& \text{with } i=1,2.
\end{aligned}
\]

This definition of product is not commutative:
$o_1 \times o_2$ will preserve terms from $o_1$ in the obligation and
relegate terms from $o_2$ to $\gamma'$, and vice versa for
$o_2 \times o_1$. However, by symmetry and transitivity of~$=$,
an edge with obligation
$o_1 \times o_2$ is semantically equivalent to one with $o_2 \times o_1$;
thus, this definition may be readily generalised to any number of obligations.

\noindent\emph{Product and sum of automata.}
Let us consider two automata $\mathcal{A}_{1,2}$ modulo the same theory.
Then, their product $\mathcal{A}_1 \otimes \mathcal{A}_2$
is an automaton with states in $Q_1 \times Q_2$ and initial states
$I_1 \times I_2$.
Whenever each automaton $\mathcal{A}_i$
has an edge $e_i$ from $q_i$ to $q_i'$
with guard $\gamma_i$ and obligation $o_i$,
then the product automaton
has an edge $e_1 \times e_2 = (q_1, q_2) \xrightarrow{\gamma_1 \land \gamma_2, o_1 \times o_2} (q'_1, q'_2)$.

To obtain the acceptance condition, let
$F^1_1,\ldots, F^1_n ,F^2_1, \ldots, F^2_m$ be the accepting sets of $\mathcal{A}_1, \mathcal{A}_2$.
Define new sets $F_1, \ldots, F_{n+m}$, and let
$e_1\times e_2 \in F_i$ iff either $e_1 \in F^1_i$ for $i \leq n$,
or $e_2 \in F^{2}_{i-n}$ for $n < i \leq n+m$.
Lastly, lift $\mathit{Acc}_1, \mathit{Acc_2}$ to $F$ by substituting references to
sets in $F^1, F^2$ with references to the corresponding sets in $F$.
Then, the acceptance condition of the product is
the conjunction of the (lifted) conditions.

The sum of two automata $\mathcal{A}_1 \oplus \mathcal{A}_2$
has states $Q_1 \cup Q_2$, initial states $I_1 \cup I_2$,
edges $R_1 \cup R_2$,
and acceptance condition $\mathit{Acc}_1 \lor \mathit{Acc_2}$.

\begin{thmrep}\label{thm:intersect}
Let $\mathcal{A}_1, \mathcal{A}_2$ be two B\"uchi obligation automata (BOA)
with languages $\mathcal{L}_1$ and $\mathcal{L}_2$.
Then $\mathcal{L}(\mathcal{A}_1 \otimes \mathcal{A}_2) = \mathcal{L}_1 \cap \mathcal{L}_2$.
\end{thmrep}
\begin{inlineproof}
Given in the appendix; simple extension of the proof for B\"uchi $\omega$-automata.
\end{inlineproof}
\begin{proof}
First, we prove that
$\mathcal{L}_1 \cap \mathcal{L}_2 \subseteq \mathcal{L}(\mathcal{A}_1 \otimes \mathcal{A}_2)$.
Let $\mathbf{v} \in \mathcal{L}_1 \cap \mathcal{L}_2$.
Then there is an accepting run $\rho_j =  e_{j1}e_{j2}\ldots$ in each $\mathcal{A}_j$.
Let $e_{ji} = (q_{ji}, \gamma_{ji}, o_{ji}, q'_{ji})$.
By construction, the product contains edges of the form
$e_i = e_{1i} \times e_{2i} = (q_{1i}, q_{2i}) \xrightarrow{\gamma, o_i} (q'_{1i}, q'_{2i})$,
with $\gamma = \gamma_{1i} \land \gamma_{2i} \land \gamma'$ and $o_i$ follow the
definition of $o_{1i} \times o_{2i}$.

For every $i$, we have that $v_i \models \gamma_{1i} \land \gamma_{2i}$:
otherwise, either $\rho_1$ or $\rho_2$ would not exist.
$v_i \models \gamma'$ because otherwise $v_{i+1}$ would not satisfy
$\SEM{o_{2i}}{v_i}$, and $\rho_2$ would not exist.
Finally, $o_i$ only includes constraints from $o_{1i}$ and $o_{2i}$.
Since $v_{i+1} \models \SEM{o_{1i}}{v_i}$ and $v_{i+1} \models \SEM{o_{2i}}{v_i}$,
we can conclude that $v_{i+1} \models \SEM{o_{i}}{v_i}$.
Thus, $\rho = e_1e_2\ldots$ is a run for $\mathbf{v}$ in the product.
This run contains infinitely many edges $(e_{1i} \times e_{2i})$
with $e_{1i}$ being a member of some accepting set in $\mathcal{A}_1$,
and infinitely many edges with $e_{2i}$ belonging to an accepting set in
$\mathcal{A}_2$, because $\rho_1$ and $\rho_2$ are both accepting.
But then, by construction of the acceptance condition of the product,
$\rho$ is accepting.

The proof for
$\mathcal{L}_1 \cap \mathcal{L}_2 \supseteq \mathcal{L}(\mathcal{A}_1 \otimes \mathcal{A}_2)$
is quite similar.
An accepting run $\rho$ in the product must correspond to two runs $\rho_1, \rho_2$
in $\mathcal{A}_1, \mathcal{A}_2$: intuitively, whenever some $v_i$
satisfies a conjunction of guards, or a product of obligations, it must
satisfy every individual guard or obligation. Therefore, $\rho_1, \rho_2$
must exist.
Are they accepting? Assume it is not the case: $\rho$ is accepting
but some $\rho_j$ is not. This means that, for every accepting set in
$\mathcal{A}_j$, $\rho$ takes every edge in that set only a finite number
of times.
However, by construction of the product's acceptance condition, this would also prevent
$\rho$ from being an accepting run in the product. Therefore, $\rho_1, \rho_2$
must be accepting in $\mathcal{A}_1, \mathcal{A}_2$.
\qed
\end{proof}

\noindent\emph{Semantics of universal branching.}
We now outline the (transition-based) semantics of automata with universal branching.
This is important as it allows one to define alternating automata, which are exponentially
more succinct than equivalent nondeterministic automata~\cite{DBLP:conf/icalp/BokerKR10}.
At any time, an automaton with universal branching may be in
a non-empty conjunction of states (or \emph{and-state}, for short);
the initial state may be an and-state as well.
Therefore, we amend the previous definition of an automaton $\mathcal{A}$ as follows:
the set $I$ is now a non-empty set of and-states,
and edges have the form $(q, \gamma, o, Z)$, where $Z$ is an and-state.

A run over a word $\mathbf{v} = v_1v_2\ldots \in \Vals^\omega$,
if it exists,
is an infinite directed acyclic graph (DAG) whose nodes are labelled
by and-states, and whose (unique) root is labelled by an element of $I$.
Note that multiple nodes may have the same and-state as their label.
Each arc in the tree has a label $(\gamma, o, \mathbf{f})$, 
where $\gamma$ is a predicate,
$o$ an obligation, and
$\mathbf{f}$ a set of indices referring to accepting sets:
we call these the \emph{accepting indices} of the arc.

The arcs in a run must satisfy additional constraints.
For a node labelled by $Z = \{q_1, \ldots, q_n\}$,
an arc labelled $(\gamma, o, \mathbf{f})$
from this node to one labelled by $Z'$ exists if and only if:
\begin{enumerate*}
\item There exist $n$ edges $e_j = (q_j, \gamma_j, o_j, Z'_j)$ such that $e_j \in R_{q_j}$ for every $j$;
\item $\gamma = \bigwedge_j \gamma_j$;
\item $o = o_1 \times \ldots \times o_n$;
\item $Z' = \bigcup_j Z'_j$;
\item $\mathbf{f} = \{ k \mid \exists j . e_j \in F_{k} \}$, where $F_{k}$ is the $k$-th accepting set of the automaton;
\item $v_i \models \gamma_i$, for every $i$;
\item $v_{i+1}\models \SEM{o_i}{v_i}$, for every $i$.
\end{enumerate*}

Intuitively, an alternating automaton runs multiple ``instances'' of
the same automaton at once, and the conditions above ensure that the run only
captures executions where all instances can read a given word.
In detail, condition 1 states that every instance must take a suitable edge to read the symbol.
Conditions 2--4 state that the arc composes edge guards by conjunction and obligations by product
(assuming that overlapping obligation domains are handled by extending the guard,
as seen above),
and that the target's label is the union of all end-states in the edges.
Condition 5 ensures that arcs hold the necessary information to determine acceptance.
Conditions 6 and 7 implement the semantics of guards and obligations.

A \emph{branch} of a run is a maximal path starting from its root.
For any branch, let $\mathbf{f}_1\mathbf{f}_2\ldots$ be the sequence of accepting indices on its arcs.
Then, the branch is \emph{accepting} for $\mathrm{Fin}(k)$ iff the set $\{ i \mid k \in \mathbf{f}_i\}$ is finite;
otherwise, the branch is accepting for $\mathrm{Inf}(k)$.
Conjunctions and disjunctions work as in the existential-branching case.
A run is accepting for an acceptance condition iff all its branches are accepting.

\smallskip\noindent\emph{Notes on expressiveness.}
We now state a few claims about the expressiveness of
obligation automata.
We will first formalise that every obligation automaton accepts a fragment of
the language of a corresponding $\omega$-automaton.
Then, we show that finite obligation automata go beyond $\omega$-regular
languages.
These claims may appear contradictory, but intuitively
the additional expressive power of obligation automata comes from
obligations within cycles, which effectively turn some variables
into a sort of memory cell.

\begin{thm}\label{thm:fragment}
Let $\mathcal{A}$ be an obligation automaton, and
$\hat{\mathcal{A}}$ the automaton we obtain from $\mathcal{A}$ by replacing every
obligation with $\varepsilon$. Then:
\begin{enumerate*}
    \item $\hat{\mathcal{A}}$ is a symbolic $\omega$-automaton over the same theory of $\mathcal{A}$;
    \item $\mathcal{L}(\mathcal{A}) \subseteq \mathcal{L}(\hat{\mathcal{A}})$.
\end{enumerate*}
\end{thm}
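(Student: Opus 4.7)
The plan is to observe that part~(1) is immediate from the definitions: a symbolic $\omega$-automaton is precisely an obligation automaton in which every edge carries the empty obligation, so replacing every $o$ by $\varepsilon$ leaves the variable set, the guards, and hence the theory $T_V$ intact. For part~(2), I would exhibit a canonical bijection between the edges (resp. arcs) of $\mathcal{A}$ and those of $\hat{\mathcal{A}}$ that preserves source, target, guard, and accepting-set membership, and show that it lifts any accepting run of $\mathcal{A}$ over a word $\mathbf{v}$ to an accepting run of $\hat{\mathcal{A}}$ over the same $\mathbf{v}$.

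First I would handle the existential-branching case. Given an accepting run $\rho = e_1 e_2 \ldots$ with $e_i = (q_i, \gamma_i, o_i, q'_i)$, define $\hat{\rho} = \hat{e}_1 \hat{e}_2 \ldots$ using the corresponding edges $\hat{e}_i = (q_i, \gamma_i, \varepsilon, q'_i)$ of $\hat{\mathcal{A}}$. The three run conditions hold: $q'_i = q_{i+1}$ and $v_i \models \gamma_i$ carry over verbatim, and $v_{i+1} \models \SEM{\varepsilon}{v_i} = \mathit{true}$ is vacuous. Declaring the accepting sets of $\hat{\mathcal{A}}$ to be $\hat{F}_k = \{\hat{e} \mid e \in F_k\}$ makes $e_i \in F_k \iff \hat{e}_i \in \hat{F}_k$, so $\rho$ and $\hat{\rho}$ satisfy exactly the same atoms $\mathrm{Fin}(k)$ and $\mathrm{Inf}(k)$; since the acceptance formula $\mathit{Acc}$ is copied unchanged, acceptance is preserved.

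For universal branching I would replay the same argument at the level of DAGs. Starting from an accepting DAG run $\rho$ of $\mathcal{A}$, build $\hat{\rho}$ on the same nodes with the same labels, and for every arc labelled $(\gamma, o, \mathbf{f})$ in $\rho$ install an arc labelled $(\gamma, \varepsilon, \mathbf{f})$ in $\hat{\rho}$, produced from the lifted edges $\hat{e}_j = (q_j, \gamma_j, \varepsilon, Z'_j)$. Conditions 1, 4, and~5 of the DAG definition depend only on source/target states and accepting-set indices (all preserved); conditions~2 and~6 are about guards (unchanged); condition~3 still holds because $\varepsilon \times \cdots \times \varepsilon = \varepsilon$ (the empty obligation is the neutral element of $\times$, since its domain is empty and therefore contributes no equality constraints); and condition~7 becomes trivial. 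Along every branch, the sequence of accepting indices is identical in $\rho$ and $\hat{\rho}$, so branch-wise acceptance, and hence run acceptance, transfers directly.

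I do not expect a genuine obstacle here; the only point worth flagging is that the product-of-obligations construction used in DAG condition~3 and in automata products interacts cleanly with the substitution $o \mapsto \varepsilon$, precisely because $\varepsilon$ introduces no extra equality guards. This neutrality, together with $\SEM{\varepsilon}{v}=\mathit{true}$, is what makes the forward inclusion automatic and simultaneously clarifies why the converse inclusion can fail, foreshadowing Theorem~\ref{thm:finite}.
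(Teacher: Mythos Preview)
Your proposal is correct and follows the same underlying idea as the paper's proof: obligations can only restrict the language, so erasing them can only enlarge it. The paper states this at a high level (``obligations may only ever \emph{restrict} the language''), whereas you make the argument explicit by constructing the lifted run $\hat{\rho}$ edge-by-edge (resp.\ arc-by-arc) and checking each run condition; this is exactly the natural way to make the paper's one-line justification rigorous, and your handling of the universal-branching case via the neutrality of $\varepsilon$ under $\times$ is the right observation.
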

\begin{proof}
The first part of the theorem states that obligations are the
\emph{only} element that differentiates obligation automata from ordinary
symbolic automata: in fact, if we replace every obligation by the empty one,
the definitions of accepting run under
existential/universal branching from above collapse into
the traditional definitions for symbolic $\omega$-automata.
The second part of the theorem is justified by observing
that obligations may only ever
\emph{restrict} the language of an automaton.
In fact, by adding an obligation to any edge of $\hat{\mathcal{A}}$ we can
only ever \emph{reject} some valuations that would have been \emph{read} by
$\hat{\mathcal{A}}$, but not vice versa.
\qed
\end{proof}

\begin{thm}\label{thm:finite}
There exists at least one finite obligation automaton $\mathcal{A}$
that recognises a language for which no finite symbolic $\omega$-automaton exists.
\end{thm}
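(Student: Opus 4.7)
My plan is to exhibit a tiny concrete obligation automaton and then argue, via a pigeonhole argument, that its language cannot be captured by any finite symbolic $\omega$-automaton.

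For the construction, I will work over the theory of integer arithmetic with a single variable $x$ and define $\mathcal{A}$ to have one state $q$ (both initial and accepting), a single accepting set containing only the self-loop $e = q \xrightarrow{\True,\, \{x \mapsto x\}} q$, and acceptance condition $\mathrm{Inf}(1)$. A quick unrolling of the obligation semantics gives $\SEM{\{x \mapsto x\}}{v} = (x = v(x))$, so the next valuation is forced to agree with $v$ on $x$. Consequently, a word $v_1 v_2 \ldots \in \Vals^\omega$ admits a run in $\mathcal{A}$ iff $v_{i+1}(x) = v_i(x)$ for every $i$; that is, $\mathcal{L}(\mathcal{A})$ is precisely the set of constant words.

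For the lower bound, I will proceed by contradiction. Suppose some finite symbolic $\omega$-automaton $\mathcal{B}$ with state set $Q$ recognises $\mathcal{L}(\mathcal{A})$. For each $a \in \Z$, the constant word $w_a = aaa\ldots$ is accepted, so there is an accepting run whose first edge is guarded by a predicate satisfied by $a$ and whose target is some state $q_a \in Q$. Since $Q$ is finite but $\Z$ is not, pigeonhole yields distinct $a \neq b$ with $q_a = q_b$. I then splice: take the first edge of the run on $w_a$, then continue with the tail (from position~2 onwards) of the run on $w_b$. The result is a valid run on the mixed word $w = a b b b \ldots$, since all guards are satisfied symbol-by-symbol, and it is accepting because altering only finitely many edges cannot affect which edges appear infinitely often under any Emerson-Lei condition. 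Thus $\mathcal{B}$ accepts the non-constant word $w$, contradicting $\mathcal{L}(\mathcal{B}) = \mathcal{L}(\mathcal{A})$.

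The main obstacle I foresee is the universal-branching case, where a run is a DAG rather than a linear sequence. I expect the same splicing idea to carry over: apply pigeonhole to the finite and-state reached after the first layer of each $w_a$ run, then graft that first layer onto the tail sub-DAG of the $w_b$ run. Since acceptance is per-branch and depends only on branch tails, the spliced DAG remains accepting and the same contradiction arises. Carefully formalising this alternating variant is the piece of technical bookkeeping I would expect to occupy the bulk of a detailed proof.
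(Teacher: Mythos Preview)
Your proposal is correct and follows essentially the same approach as the paper: exhibit a one-state B\"uchi obligation automaton over a single integer variable and then use a pigeonhole argument on the state reached after the first symbol to rule out any finite symbolic automaton. The paper happens to pick the language of ever-incrementing sequences (obligation $x \mapsto x+1$) rather than constant sequences ($x \mapsto x$), but the argument structure---including your splicing lower bound---is the same, and is in fact spelled out more carefully in your version than in the paper's own informal pigeonhole sketch.
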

\begin{proof}

Consider the theory of linear integer arithmetic
(LIA) over a single integer variable $x$.
We may think of the language $\mathcal{L}$
of valuation sequences where $x$
repeatedly increments by one unit; i.e.,
$\mathcal{L}$ is made of sequences $v_0v_1\ldots$
such that $v_i(x) = v_0(x) + i$ for every $i$.
This language cannot be captured by any \emph{finite} symbolic
automaton. In fact, handling the first valuation
already requires an infinite set of transitions.
Symbolic predicates will not help us here:
if an initial transition can be triggered by multiple values of $x$,
then in the target state we inevitably lose information on the actual
valuation that had just been read, which makes it impossible to recognise
$\mathcal{L}$.
At the same time, the automaton must also have infinitely many states.
If that is not the case, by the pigeonhole principle the automaton must send
the prefixes of two different words into the same state.
This would make it impossible for the automaton to recognise the suffixes of
both words.
The best that we can do is an automaton with infinitely many states and
transitions, with the (co-B\"uchi) condition that no state is visited
infinitely often (Fig.~\ref{fig:sequence1}).
This also means that $\mathcal{L}$ is not $\omega$-regular.

By contrast, the B\"uchi obligation automaton (BOA)
of Fig.~\ref{fig:sequence2}
is able to recognise~$\mathcal{L}$.
We can prove this inductively. In the initial state, the automaton
may read any valuation $v_0$.
Every such valuation is the beginning of a word in $\mathcal{L}$,
therefore we recognise $\mathcal{L}$ at least up to the first symbol.
Now, assume we recognise every word up to the first $k$
symbols $v_0\ldots v_{k-1}$.
After $v_{k-1}$ is read and the automaton's only transition is taken,
the obligation yields the constraint
$v_k \models \SEM{x \mapsto x+1}{v_{k-1}}$, that is,
$v_k \models (x = v_{k-1}(x)+1)$.
But this means that we also read the $k+1$-th
symbol, and since we reject everything except that symbol,
we recognise $\mathcal{L}$ up to $k+1$ symbols.
Therefore, since the automaton recognises all of
$\mathcal{L}$ and rejects every word not in $\mathcal{L}$,
its language is $\mathcal{L}$.
\qed
\end{proof}

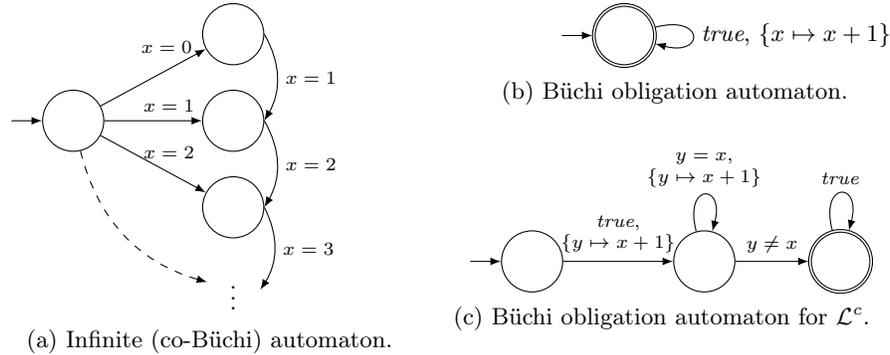
\begin{figure}[t]
\centering
\begin{minipage}[t]{0.49\textwidth}
\subfloat[Infinite (co-B\"uchi) automaton.\label{fig:sequence1}]{%
\centering
\begin{tikzpicture}[auto, every initial by arrow/.style={>=latex}]
\node [state,initial,initial text=] (q0){};
\node [state,right= 4em of q0] (q2){} ;
\node [state,above=1em of q2] (q1){} ;
\node [state,below=1em of q2] (q3){} ;
\node [circle,inner sep=2pt, below=1em of q3] (q4) {$\vdots$};
\node[right=6em of q2] (space) {};

\begin{scope}[->, >=latex]
\draw (q0) -- (q1) node[midway,xshift=18pt,yshift=6pt] {\scriptsize$x=0$};
\draw (q0) -- (q2) node[midway,xshift=6pt] {\scriptsize$x=1$};
\draw (q0) -- (q3) node[midway,above,xshift=6pt,yshift=-1pt] {\scriptsize$x=2$};
\draw[dashed] (q0) to[bend right] (q4);

\draw (q1.east) to[bend left] node[midway] {\scriptsize$x=1$} (q2.east);
\draw (q2.east) to[bend left] node[midway] {\scriptsize$x=2$} (q3.east);
\draw (q3.east) to[bend left] node[midway] {\scriptsize$x=3$} (q4.east);
\end{scope}
\end{tikzpicture}}
\end{minipage}
\hfill
\begin{minipage}[t]{0.5\textwidth}
\subfloat[B\"uchi obligation automaton.\label{fig:sequence2}]{%
\centering
\begin{tikzpicture}[auto, every initial by arrow/.style={>=latex}]
\node [state, initial, accepting,initial text=] (q0){};
\node[left=5em of q0] (extra) {};
\begin{scope}[->, >=latex]
\draw (q0) to[loop right] node[midway] {$\textit{true}$, $\{x \mapsto x+1\}$} (q0);
\end{scope}
\end{tikzpicture}}
\smallskip

\subfloat[B\"uchi obligation automaton for $\mathcal{L}^c$.\label{fig:sequence3}]{%
\centering
\begin{tikzpicture}[auto, every initial by arrow/.style={>=latex}]
    \node [state, initial,initial text=] (q0){};
    \node [state, right=4.5em of q0] (q1){};
    \node [state, accepting, right=3em of q1] (q2){};
    \begin{scope}[->, >=latex]
        \draw (q0) to node[midway,align=center,font=\scriptsize] {$\textit{true}$,\\$\{y \mapsto x+1\}$} (q1);
        \draw (q1) to[loop above] node[midway,align=center,font=\scriptsize] {$y= x$,\\$\{y \mapsto x+1\}$} (q1);
        \draw (q1) to node[midway,align=center,font=\scriptsize] {$y\neq x$} (q2);
        \draw (q2) to[loop above] node[midway,align=center,font=\scriptsize] {$\mathit{true}$} (q2);
    \end{scope}
\end{tikzpicture}}
\end{minipage}
\caption{Recognising the language $\mathcal{L}$ of a repeatedly-incrementing counter.}
\end{figure}

\noindent\emph{Notes on complementation.}
Figure~\ref{fig:sequence3} recognises the complement of language $\mathcal{L}$
from Theorem~\ref{thm:finite} by introducing a fresh variable $y$.
This variable is continually constrained to ``store'' $x+1$, so that a
successful test $x\neq y$ is enough to recognise that the word belongs
to $\mathcal{L}^c$. Words in $\mathcal{L}$ are instead rejected, as they make the
automaton continuously loops in the non-accepting state in the middle.
We should stress that, although the role of $y$ is reminiscent
of that of a register, formally there is no difference between $x$ and $y$ in
terms of semantics.

We suspect that complementation without either recurring to alternation or changing the theory domain $V$ is
impossible even for simple automata such as this one. Obligations may only introduce
\emph{equality} constraints, and for complementation we would likely need a dedicated construct to
add \emph{inequality} constraints.
We sketch such a construct in Appendix~\ref{apx:nobl} and will investigate it in future work.

\newcommand*{\NONCE}{\mathcal{N}\kern-2.5pt\mathit{once}}
For other automata, complementation appears to
be significantly harder, and it is unclear whether any such extension would
help.
For instance, consider a more complex example from the literature~\cite{DBLP:conf/lics/DemriL06}:
the language $\NONCE_x$ of words such that $x$ never takes the same value
more than once.
Interestingly, while this language is relatively easy to capture with register
automata, it requires a rather complex obligation automaton. In turn,
its complement $\NONCE_x^c$ can be recognised by a simple nondeterministic BOA, but not by a
simple RA. Let us start by discussing this automaton (Fig.~\ref{fig:noncec}).
In the initial state, we admit every valuation, but constrain $y$
in the next valuation to assume the current value of $x$.
Then, as long as we read symbols with $x\neq y$, we may nondeterministically
decide to hold the current value of $y$, or replace it by the value of $x$.
As soon as the test $x=y$ holds, we transition into the accepting state.
Thus, words that belong to $\NONCE_x^c$ always have an accepting run associated
to them, unlike words from $\NONCE_x$ which force the automaton to remain in the middle
state forever.

Recognising $\NONCE_x$ is rather more difficult.
Here we need the intuition that, in order to reject a word
from $\NONCE_x^c$,
we need to ``store'' in $y$ the first occurrence of the
duplicated value, and then hold that value until the second occurrence
appears.
To capture this behaviour, we need to consider \emph{every} possible sequence of the
two obligations $y \mapsto x$, $y \mapsto y$,
by means of universal branching (Fig.~\ref{fig:nonce}).
We start, again, by setting obligation $y \mapsto x$ after reading any
initial valuation.
Then, we enter a gadget composed of two accepting states,
each connected to each other and itself by edges that are guarded by
$x\neq y$, and that set either $y \mapsto x$ (for edges labelled by $\lambda_x$) or
$y \mapsto y$ (for $\lambda_y$).
As soon as $x=y$ holds, we leave the gadget and reach a non-accepting sink.

Words that are in $\NONCE_x$ have no way of reaching this sink,
and remain in the accepting gadget. The intuition here is that the
specific order of obligations over $y$ \emph{does not matter} for these words: if every value
of $x$ is different from all others, and $y$ may only store a previous value of
$x$, then the test $x=y$ will fail no matter which concrete value is ``held''
in $y$.
On the contrary, for words in $\NONCE_x^c$ there exists at least one
execution (that is, a branch in its run tree) that performs the ``correct''
sequence of obligations which makes the test $x=y$ hold when a value is seen for
the second time. This branch is not accepting because it ends up in the
non-accepting sink state, and thus the entire run is not accepting.\footnote{See Figure~\ref{fig:nonceruns} in appendix for a graphical representation of an accepting and a non-accepting run for this automaton.}

\begin{figure}[t]
\subfloat[Automaton for $\NONCE_x^{c}$.\label{fig:noncec}]{%
\begin{tikzpicture}[auto, every initial by arrow/.style={>=latex}]
\node [state, initial above, initial text=] (q0) {};
\node [state, right=2.5em of q0] (q1) {};
\node [state, accepting, right=2.5em of q1] (q2) {};

\begin{scope}[->, >=latex]
\draw (q0) to node[midway,font=\scriptsize,align=center] {$\mathit{true}$\\$y \mapsto x$} (q1);
\draw (q1) to node[midway,font=\scriptsize,align=center] {$x=y$} (q2);
\draw (q1) to[loop above] node[midway,font=\scriptsize,align=center] {$x\neq y,\;y \mapsto x$} (q1);
\draw (q1) to[loop below] node[midway,font=\scriptsize,align=center] {$x\neq y,\;y \mapsto y$} (q1);
\draw (q2) to[loop above] node[midway,font=\scriptsize,align=center] {$\mathit{true}$} (q2);
\end{scope}
\end{tikzpicture}}
\hfill
\subfloat[Alternating automaton for $\NONCE_x$. Let $\lambda_x = (x\neq y, y \mapsto x)$, $\lambda_y = (x\neq y, y \mapsto y)$.\label{fig:nonce}]{%
\begin{tikzpicture}[auto, every initial by arrow/.style={>=latex}]
\node [state, initial above, initial text=] (q0) {$q_0$};
\node [state, accepting, right=2.5em of q0] (q1) {$q_1$};
\node [state, accepting, right=4em of q1] (q2) {$q_2$};
\node [state, right=4em of q2] (q3) {$q_3$};

\node [yshift=4pt,circle, inner sep=1pt, fill, right=2em of q1,label={[font=\scriptsize]above:$\lambda_x$}] (q12x) {};
\node [yshift=-4pt,circle, inner sep=1pt, fill, right=2em of q1,label={[font=\scriptsize]below:$\lambda_y$}] (q12y) {};

\node [circle, inner sep=1pt, fill, above=1.2em of q2] (q21x) {};
\node [circle, inner sep=1pt, fill, below=1.2em of q2] (q21y) {};

\begin{scope}[->, >=latex]
\draw (q0) to node[midway,font=\scriptsize,align=center] {$\mathit{true}$\\$y\mapsto x$} (q1);
\draw (q2) to node[midway,font=\scriptsize,align=center] {$x=y$} (q3);

\draw (q1) -- ++(0,-3.5em) -| node[font=\scriptsize,pos=0.37] {$x=y$} (q3);
\draw (q12x) -- (q2.west |- q12x);
\draw (q12x) to[bend right] (q1);
\draw (q12y) -- (q2.west |- q12y);
\draw (q12y) to[bend left] (q1);
\draw (q21x) to[bend right] (q1);
\draw (q21y) to[bend left] (q1);
\draw (q21x) to[bend right=400] (q2);
\draw (q21y) to[bend left=400] (q2);

\draw (q3) to[loop above] node[midway,font=\scriptsize,align=center] {$\mathit{true}$} (q3);
\end{scope}
\draw (q1) -- (q1.east |- q12x) -- (q12x);
\draw (q1) -- (q1.east |- q12y) -- (q12y);

\draw (q2) to node[right]{\scriptsize$\lambda_x$} (q21x);
\draw (q2) to node[right]{\scriptsize$\lambda_y$} (q21y);

\end{tikzpicture}}
\caption{Automata for a language of non-repeating values of $x$.}

\end{figure}
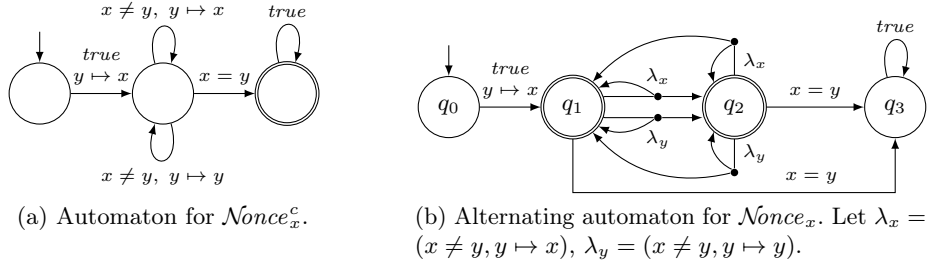

\newcommand*{\ObLTL}{\mathrm{ObLTL}}

\section{The \HOApp format}\label{sec:hoapp}

In this section, we give an overview of our proposed HOA dialect.
We first briefly review the original HOA format\footnote{
This overview focusses
on automata with transition-based acceptance and explicit edge labels, due to space constraints.
The format is rather more flexible than this:
for more details, we refer the reader to \HOAone's full documentation (\url{https://adl.github.io/hoaf/}).}
(Section~\ref{sec:hoa}),
and then describe our syntactic additions to it (Section~\ref{sec:syntax}).
Notice that these changes are meant to be incremental: any \HOAone automaton should also be accepted by a \HOApp parser.
Then, we introduce a simple type system for the new format (Section~\ref{sec:types}), and discuss its semantics (Section~\ref{sec:sem}).

\subsection{Overview of \HOAone}\label{sec:hoa}

The description of an automaton in \HOAone is split into two parts: a \emph{header} containing general information, and a \emph{body} describing its states and transitions.

The header always starts with \texttt{HOA: v1} and is organised into header items (or simply \emph{items}), which are made of a name, a colon sign, and some data. For instance, the item \texttt{States: 10} declares that the automaton has 10 states, while \texttt{Start: 0} indicates that the automaton starts from the state with index 0.
The \texttt{AP} item introduces a number of \emph{atomic propositions} ($AP$). For instance, \texttt{AP: 2 "x" "y"} declares that the automaton's edges are labelled by predicates over two propositions $x,y$.
The \texttt{Acceptance} item contains the automaton's (Emerson-Lei) acceptance condition,
following the same grammar described in Section~\ref{sec:oblaut}.
For instance, a valid acceptance item would be \texttt{Acceptance: 2 Fin(0) \& Inf(1)},
where the initial \texttt{2} indicates the total number of acceptance sets in the automaton.
(The actual composition of these sets is defined in the body of the automaton).
Lastly, \emph{aliases} are another kind of header item worth mentioning.
An alias is an identifier prefixed by \texttt{@}, and associated to an expression.
When used in an expression, an alias is expanded into its corresponding expression, akin to a syntactic macro.
Expansion is recursive, meaning that an alias may refer to other, simpler aliases.
Expressions refer to an atomic proposition simply by its (0-based) position in the \texttt{AP} header item. For instance, given \texttt{AP: 2 "x" "y"}, the item \texttt{Alias: @x-and-noty 0 \& !1} defines an alias for the predicate $x \land \neg y$.
Users may also attach custom items, as long as they start with a lowercase
letter: tools that do not recognise these items are allowed to simply ignore them.

The body
is introduced by the keyword \texttt{-{}-BODY-{}-} and
contains a sequence of state descriptors. Each of these starts with the \texttt{State:} keyword, followed by the \emph{index} of the state and zero or more \emph{edges} describing its outgoing transitions.
Each edge has a predicate over $AP$ (its \emph{guard}), the index of a \emph{target state}, and an optional acceptance \emph{signature}, i.e., the list of acceptance sets the edge belongs to (if any).
The body is terminated by the keyword \texttt{-{}-END-{}-}.

To illustrate the format, we provide an example of a simple \HOAone description of an automaton for the LTL property $G\, x$, along with its graphical representation (Fig.~\ref{fig:Gx}).
From initial state $s_0$, the automaton loops back to $s_0$ if $x$ holds, or moves to state $s_1$ otherwise.
State $s_1$ is a sink, as it only allows an always-enabled transition that loops back to $s_1$.
The edges leaving $s_0$ belong to acceptance set 0, and the B\"uchi condition $\mathrm{Inf(0)}$ states that the automaton must take at least one of them infinitely often.
This may only happen if $\neg x$ never holds, which indeed matches the semantics of $G\, x$.
The example also illustrates the usage of an alias (\texttt{@x}) to improve the readability of labels.

\begin{figure}[t]
\hspace{2em}
\begin{minipage}[t]{0.32\textwidth}
\strut\vspace*{-\baselineskip}\newline
\begin{lstlisting}[breaklines=true,basicstyle=\scriptsize\ttfamily,numbers=left]
HOA: v1
States: 2
Start: 0
AP: 1 "x"
Acceptance: 1 Inf(0)
Alias: @x 0
\end{lstlisting}
\vfill
\end{minipage}
\begin{minipage}[t]{0.22\textwidth}
\strut\vspace*{-\baselineskip}\newline
\begin{lstlisting}[breaklines=true,basicstyle=\scriptsize\ttfamily,numbers=left,firstnumber=7]
--BODY--
State: 0 "s0"
[@x] 0 {0}
[!@x] 1 {0}
State: 1 "s1"
[t] 1
--END--
\end{lstlisting}
\end{minipage}
\hfill
\begin{minipage}[t]{0.38\textwidth}
\strut\vspace*{-\baselineskip}\newline
\centering
\begin{tikzpicture}[auto, every initial by arrow/.style={>=latex}]
\begin{scope}[circle, inner sep=3pt]
\node [state,initial,initial text=] (s0) {\footnotesize$s_0$};
\node [state, right= 4em of s0] (s1) {\footnotesize$s_1$};
\end{scope}

\begin{scope}[->,>=latex]
\draw (s0) -- (s1) node[midway, label=below:{\SET}] {\footnotesize$\neg x$};
\draw (s0) to[loop above] node[midway, align=center] {{\footnotesize$x$} \SET} (s0);
\draw (s1) to[loop above] node[midway] {\footnotesize$\True$} (s1);

\end{scope}
\end{tikzpicture}
\end{minipage}

\caption{\HOAone description of a B\"uchi automaton for $G\, x$.}\label{fig:Gx}
\end{figure}
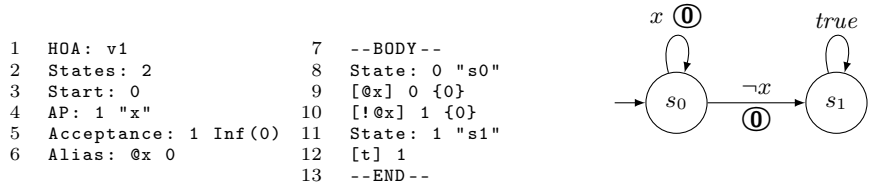

\subsection{Syntactic extensions to \HOAone}\label{sec:syntax}

First, we detail the new syntactic constructs that \HOApp
adds to \HOAone.
We use an extended BNF adapted from the latter's documentation:
a quoted string \texttt{"lit"} is a literal, an unquoted name \texttt{rule} refers to a production rule,
\texttt{(p)?} denotes an optional occurrence of \texttt{p},
and \texttt{(p)*} denotes zero or more occurrences of \texttt{p}.
Production rules for non-terminals and terminals are defined with \texttt{::=} and~\texttt{:}, respectively.
If we omit a production rule, it is implied that the rule is part of the \HOAone specification;
similarly, when we use ellipsis in a production rule (e.g., \texttt{p ::= p1 | ...}),
we are extending a \HOAone rule with the same name.

\smallskip\noindent\emph{The \texttt{HOA} item.}
A \HOApp specification must start with the \texttt{HOA: v1pp} item.

\smallskip\noindent\emph{The \texttt{AP} and \texttt{AP-type} items.}
The header item \texttt{AP} has the same syntax as in \HOAone.
However, we now refer to the elements of \texttt{AP} as \emph{variables},
rather than atomic propositions.
We introduce an optional header item \texttt{AP-type}:
if present, it shall be followed by $|AP|$ type names, which are $\Bools$, $\Ints$, or $\Reals$.
The $i$-th element of this vector gives the type of the $i$-th variable.
If this header is omitted, we shall assume all variables to be Booleans.

\begin{table}[t]
\centering
\caption{Extended syntax of \HOApp.}\label{fig:hoapp-syntax}
\hrulefill
{\footnotesize
\begin{align*}
\texttt{header-item } &\texttt{::= "AP-type:" TYPE*}
\texttt{ |} \texttt{ "controllable-AP:" INT INT* }\\
&\texttt{|} \texttt{ "assume:" ltl }
 \texttt{|} \texttt{ "guarantee:" ltl } \texttt{| ...}
\\
\texttt{ltl } &
 \texttt{::= label-expr }
 \texttt{|} \texttt{ "!" ltl | MOD ltl}\\
&\texttt{|} \texttt{ ltl "\&" ltl | ltl "|" ltl | ltl "U" ltl}\\
\texttt{label } &\texttt{::= } \texttt{"[" label-expr ("\$" obl)? "]"}\\
\texttt{label-expr } &\texttt{::= }
    \texttt{INTLIT | REALLIT} \texttt{ |} \texttt{ "-" label-expr}\\
    &\texttt{|} \texttt{ label-expr OP label-expr}
    \texttt{ |} \texttt{ label-expr CMP label-expr}\\
    &\texttt{|} \texttt{ label-expr TEST label-expr} \texttt{ |} \texttt{ ...}\\
\end{align*}
}\\[-4em]
\begin{minipage}[t]{0.45\textwidth}
\begin{align*}
    \texttt{obl } &
    \texttt{::= } \texttt{asgn ("," asgn)*}\\
    \texttt{asgn } &\texttt{::= }
    \texttt{lhs ":=" label-expr}\\
    \texttt{lhs } &\texttt{::= } \texttt{INT | ANAME}\\
    \texttt{TYPE } &\texttt{: "bool" | "int" | "real"}\\
    \texttt{MOD } &\texttt{: "F" | "G" | "X"}\\
\end{align*}
\end{minipage}\hfill
\begin{minipage}[t]{0.45\textwidth}
\begin{align*}
    \texttt{OP } &\texttt{: }
    \texttt{"+" | "-" | "*"}\\
    \texttt{CMP } &\texttt{: }
    \texttt{"<" | "<=" | ">" | ">="}\\
    \texttt{TEST } &\texttt{: }
    \texttt{"==" | "!="}\\
    \texttt{INTLIT } &\texttt{: } \texttt{i(0|[1-9][0-9]*)}\\
    \texttt{REALLIT } &\texttt{: } \texttt{r(0|[1-9][0-9]*).[0-9]*}
\end{align*}
\end{minipage}

\hrulefill%
\end{table}

\smallskip\noindent\emph{Extended expression syntax.}
We extend the grammar of expressions (\texttt{label-expr}) with
numeric constants, comparisons, and arithmetic operators.
Operators bind in the following, descending order of precedence:
\texttt{! -} (unary);
\texttt{*};
\texttt{+ -};
\verb+< <= > >=+;
\verb+== !=+;
\verb+&+;
\verb+|+.
All binary operators are left-associative; unary ones (\texttt{-} and \texttt{!}) are right-associative.

Like in \HOAone, plain numeric literals are references to variables;
thus, we use prefixes \texttt{i} and \texttt{r} to denote integer and real constants, respectively.

\smallskip\noindent\emph{Extended edge syntax.}
Edges are labelled by a predicate, which is the edge's \emph{guard},
and zero or more assignments that constitute the edge's \emph{obligation}.
The left-hand side of each assignment shall be a reference to a variable:
thus, either a plain integer or an alias that resolves to an integer.
Examples of (syntactically) valid labels in \HOApp are \texttt{[t \$ 0 := f]} or \texttt{[1 \$ 2 := 2 + 3, 0 := i1]}.

\noindent\emph{Additional header items.} Our format adopts the 
\texttt{controllable-AP} item~\cite{DBLP:journals/corr/abs-1912-05793} to specify
whether a variable is set adversarially by the environment or may be controlled
by the system, allowing \HOApp to represent game structures.
We also introduce additional items \texttt{assume}, \texttt{guarantee} that contain LTL formulas,
and may appear zero or more times.

\subsection{Typing rules}\label{sec:types}

Our format considers three data types, namely Booleans (\Bools),
integers (\Ints), and real numbers (\Reals),
with domains $\dom(\Bools) = \{\textit{true}, \textit{false}\}$,
$\dom(\Ints) = \Z$, $\dom(\Reals) = \R$.
We also introduce an internal type \LTLs of well-typed LTL formulas.
We assume $\Ints$ to be a subtype of $\Reals$ and $\Bools$ to be a subtype of $\LTLs$,
denoted by $\Ints \leq \Reals$ and $\Bools \leq \LTLs$, respectively.
Subtyping is reflexive, transitive, and antisymmetric.
Given types $\tau_1, \tau_2$, we define $\max(\tau_1, \tau_2)$ to be $\tau_1$ iff $\tau_2 \leq \tau_1$; $\tau_2$ iff $\tau_1 \leq \tau_2$; and $\bot$ otherwise. We use $\max(\cdot)$ to describe some implicit type conversion rules: for instance, the result of $\texttt{i1} + \texttt{r2.5}$ has type $\max(\Ints, \Reals)$,
which in this case is $\Reals$;
$\texttt{i0} + \texttt{i2}$, in turn, has type $\Ints$.
We do not define a separate type for assignments; instead, we use the notation
$x \mathbin{\texttt{:=}} e : \diamond$ to denote that an assignment is \emph{well-typed}.

Given a \HOApp automaton with variables
$\ x_1, \ldots, x_n$ and
$\texttt{AP-type:}\ \tau_1, \ldots, \tau_n$, we define
$\textit{type}(i) = \tau_i$, for $i=1,\ldots,n$.
Under this definition, we
can type-check expressions and obligations according to the rules in Table~\ref{tab:typerules}.
We say that the automaton is \emph{well-typed} iff all the following requirements hold:
\begin{enumerate*}[label=(\roman*)]
    \item all its guards have type $\Bools$;
    \item all of its obligations
    are well-typed; and
    \item all formulas under LTL header items (\texttt{assume}, \texttt{guarantee}) have type $\LTLs$.
\end{enumerate*}

\begin{table}[t]
    \caption{Typing rules. Assume $\kappa$ an \texttt{INT} term and $\texttt{r}\rho$ a \texttt{REALLIT} term.}\label{tab:typerules}
\footnotesize
\hrulefill
\[
\arraycolsep=9pt\def\arraystretch{2}
\begin{array}{cccc}
\infer{
\def\arraystretch{0.6}\begin{array}{c}
\texttt{t} : \Bools\\
\texttt{f} : \Bools\\
\texttt{i}\kappa : \Ints\\
\texttt{r}\rho : \Reals
\end{array}}{} 
&
\infer{\kappa : \tau}{\textit{type}(\kappa) = \tau} &
\infer{e : \tau}{e : \sigma & \sigma \leq \tau} &
\infer{\texttt{-}e : \tau}{e : \tau & \tau\leq\Reals}
\end{array}
\]\\[-2em]
\[
\arraycolsep=9pt\def\arraystretch{2}
\begin{array}{ccc}
\infer{
\def\arraystretch{0.4}\begin{array}{c}
e_1 \texttt{ CMP } e_2 : \Bools\\
e_1 \texttt{ TEST } e_2 : \Bools\\
e_1 \texttt{ OP } e_2 : \max(\tau_1, \tau_2)\\
\end{array}
}{e_1 : \tau_1 & e_2: \tau_2 & \tau_1, \tau_2 \leq \Reals}
&
\infer{
\def\arraystretch{0.6}\begin{array}{c}
e_1 \texttt{ TEST } e_2 : \max(\tau_1, \tau_2)\\
e_1 \texttt{ \& } e_2 : \max(\tau_1, \tau_2)\\
e_1 \texttt{ | } e_2 : \max(\tau_1, \tau_2)\\
\end{array}
}{e_1 : \tau_1 & e_2: \tau_2 & \tau_1, \tau_2 \leq \LTLs}
&
\infer{%
\def\arraystretch{0.6}\begin{array}{c}
\texttt{X}\ \varphi : \LTLs\\
\texttt{F}\ \varphi : \LTLs\\
\texttt{G}\ \varphi : \LTLs\\
\end{array}%
}{\varphi : \LTLs}
\\[1em]
\infer{\texttt{!}e : \tau}{e : \tau & \tau \leq \LTLs}
&
\infer{%
\def\arraystretch{0.6}\begin{array}{c}
\varphi_1 \mathbin{\texttt{U}} \varphi_2: \LTLs\\
\end{array}%
}{\varphi_1 : \LTLs & \varphi_2 : \LTLs}
&
\infer{\textit{lhs} \mathbin{\texttt{:=}} e : \diamond}{\textit{lhs} : \tau & e : \tau}
\end{array}
\]

\hrulefill
\end{table}

\subsection{Semantics}\label{sec:sem}

A well-typed \HOApp specification describes an obligation automaton
$\mathcal{A} = \langle V, Q, R, I, F, \mathit{Acc} \rangle$.
The set of variables $V$ is given by the \texttt{AP:} item.
$Q, R, F$ are derived by the specification's body: each \texttt{State} term
defines a state $q \in Q$ and its associated $R_q$, as well as the
acceptance sets $F$ to which the edges belong (if any).
$I$ and $\mathit{Acc}$ are given by \texttt{Start:} and \texttt{Acceptance:}
items, respectively.

Notice that \texttt{State:} items may contain conjunctions of states;
similarly, the target of any edge may be a conjunction of states.
A \HOApp automaton features universal
branching if and only if any of these features occur.

\begin{table}[t]
\caption{Denotational semantics of expressions.}\label{tab:sem}
\hrulefill
\footnotesize
\begin{align*}
    \SEM{\texttt{t}}{\Val} &= \True &
    \SEM{\texttt{i}\kappa}{\Val} &= \kappa &
    \SEM{!e}\Val &= \neg\SEM{e}\Val &
    \SEM{\texttt{f}}{\Val} &= \False &
    \SEM{\texttt{r}\rho}{\Val} &= \rho &
    \SEM{\texttt{-}e}\Val &= -\SEM{e}\Val
\end{align*}\\[-3.8em]
\begin{align*}
    \SEM{\kappa}{\Val} &= \Val(p_\kappa) &
    \SEM{e_1 \mathbin{\texttt{+}}  e_2}\Val &=
    \SEM{e_1}\Val + \SEM{e_2}\Val &
    \SEM{e_1 \mathbin{\texttt{\&}} e_2}\Val &=
    \SEM{e_1}\Val \land \SEM{e_2}\Val\\
    \SEM{e_1 \mathbin{\texttt{*}} e_2}\Val &=
    \SEM{e_1}\Val \cdot \SEM{e_2}\Val &
    \SEM{e_1 \mathbin{\texttt{-}} e_2}\Val &=
    \SEM{e_1}\Val - \SEM{e_2}\Val &
    \SEM{e_1 \mathbin{\texttt{|}} e_2}\Val &=
    \SEM{e_1}\Val \lor \SEM{e_2}\Val
    \end{align*}\\[-3.8em]
    \begin{align*}
    \SEM{e_1 \texttt{ < } e_2}\Val &=
    \True  \text{ iff } \SEM{e_1}\Val < \SEM{e_2}\Val%
    \quad
    &
    \quad
    \SEM{e_1 \texttt{ == } e_2}\Val &=
    \True  \text{ iff } \SEM{e_1}\Val = \SEM{e_2}\Val%
\end{align*}

\hrulefill
\end{table}

\noindent\emph{Semantics of \texttt{guarantee}.}
Each \texttt{guarantee} header item specifies an additional acceptance condition in the form
of an LTL formula. Atoms in the formula may be arbitrary predicates over the automaton's variables.
The intended meaning of these items is that the automaton accepts only those words that satisfy its acceptance
condition \emph{and} the additional formulas.

Formally, let $\mathcal{B}(\varphi)$ be the B\"uchi automaton for LTL property $\varphi$.
Such an automaton may be obtained by treating the atoms of $\varphi$ as
atomic propositions, and then using the usual construction for LTL~\cite{DBLP:conf/lics/VardiW86}.
An implementation of this procedure is presented in Section~\ref{sec:tool}.
Then, a \HOApp specification
with one or more \texttt{guarantee} formulas
$\varphi_1, \varphi_2, \ldots$ describes the automaton
$\mathcal{A} \otimes \mathcal{B}(\varphi_1 \land \varphi_2 \land \ldots)$,
where $\mathcal{A}$ is the automaton given by the specification ignoring LTL header items.

\noindent\emph{Semantics of \texttt{assume}.}
An automaton with an \texttt{assume} formula
immediately accepts every run where the assumption is \emph{not} satisfied.
Thus, a specification with \texttt{assume} formulas $\alpha_1, \alpha_2, \ldots$
and guarantees $\varphi_1, \varphi_2, \ldots$
describes the automaton
$\mathcal{B}(\neg(\alpha_1 \land \alpha_2 \land \ldots)) \oplus (\mathcal{A} \otimes \mathcal{B}(\varphi_1 \land \varphi_2 \land \ldots))$,
where $\mathcal{A}$, $\mathcal{B}(\cdot)$ are as defined above.

\begin{figure}[t]
\centering
\begin{minipage}[t]{0.45\textwidth}
\lstinputlisting[
caption={Example: Simple arbiter (after auto-aliasing).},
label={ex:arbiter},
breaklines=true,
basicstyle=\scriptsize\ttfamily
]{arbiter-hoapp.hoa}
\end{minipage}
\hfill
\begin{minipage}[t]{0.52\textwidth}
\lstinputlisting[
caption={The arbiter from Listing~\ref{ex:arbiter}, lowered into \HOAone.},
label={ex:arbiter-v1},
breaklines=true,
basicstyle=\scriptsize\ttfamily
]{arbiter-hoav1.hoa}
\end{minipage}

\begin{tikzpicture}[auto, every initial by arrow/.style={>=latex}]
\node [state,accepting,initial above,initial text=] (s0) {\footnotesize$s_0$};
\node [state, right= 11em of s0] (s1) {\footnotesize$s_1$};
\node [state, left= 7em of s0] (s2) {\footnotesize$s_2$};

\begin{scope}[->,>=latex]
\draw (s0) -- (s1) node[midway] {\footnotesize$x = 0 \land y > 0, \{x \mapsto y\}$};
\draw (s0) -- (s2)  node[midway, above] {\footnotesize$x \neq 0, \{x \mapsto x\}$};
\draw (s1) to[bend left] node[midway] {\footnotesize$\neg\mathit{dec}\land\neg\mathit{pause}, \{x \mapsto x\}$} (s0);

\draw (s1) to[loop above] node[midway] {\footnotesize$\mathit{dec}\land\neg\mathit{pause}, \{x \mapsto x-1\}$} (s1);
\draw (s1) to[loop right] node[midway] {\footnotesize$\mathit{pause}, \{x \mapsto x\}$} (s1);
\draw (s2) to[loop above] node[midway] {\footnotesize$\True, \{x \mapsto x\}$} (s2);

\end{scope}
\end{tikzpicture}
\caption{Graphical representation of the automaton from Listing~\ref{ex:arbiter}.}\label{fig:arbiter}
\end{figure}
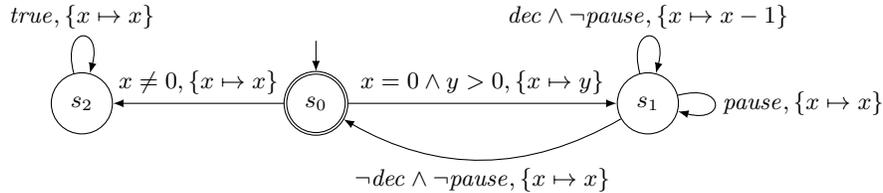

\subsection{Example: A simple arbiter}
In this section, we illustrate \HOApp's features by showing how it may be used to describe games over infinite-state arenas,
similarly to how~\cite{DBLP:journals/corr/abs-1912-05793} uses (an extension of) \HOAone as a formalism for
finite-state reactive synthesis problems.

Namely, we introduce an automaton (Listing~\ref{ex:arbiter} and Fig.~\ref{fig:arbiter})
for a game where the environment controls an integer \texttt{y} and a Boolean \texttt{pause}, while the other
variables (\texttt{x} of type $\Ints$, \verb|dec| of type $\Bools$) are left to the controller.
Naturally, both players still have to obey the automaton's obligations when giving values to these variables.
In the initial state $s_0$, the environment picks a value for \texttt{y}, which is then assigned to \texttt{x} as we move to state $s_1$.
Here, the controller can manipulate \verb|dec| to choose whether to decrement \texttt{x} and remain in $s_1$, or do nothing (keeping \texttt{x} as-is) and return to $s_0$.
When the controller chooses the former, the environment can prevent the decrement by setting \texttt{pause} to true.
However, the assumption $ G F\, \neg\texttt{pause}$ mandates that the environment must allow \texttt{x} to decrement infinitely often.
Whenever we are in $s_0$ and \texttt{x} is not equal to 0, the game goes to a sink state $s_2$.
The goal of the controller is to ensure that 
the automaton takes infinitely often any of the transitions leaving $s_0$.
Therefore, every time $s_1$ is visited, the controller must precisely decrement \texttt{x} to 0
(fighting the environment's delays while doing so) before going back to $s_0$: otherwise, the automaton will reach $s_2$, where the acceptance condition is impossible to satisfy and the environment wins.

\tikzset{
        doc/.style={
            document,
            draw=black!80,
            thick,
            minimum width=6mm,
            minimum height=8.48mm,
            font=\ttfamily\Large
        }
    }
\begin{figure}[t]
    \centering
\resizebox{0.9\textwidth}{!}{%
\begin{tikzpicture}[auto]
  \node[] (center) {};
  \node[doc, above=0.6em of center, document dog ear=4pt, label={below:\scriptsize\HOApp}] (a) {$\mathcal{A}$};
  \node[doc, below=0.6em of center, document dog ear=4pt, label={below:\scriptsize\HOApp}] (b) {$\mathcal{B}$};
  \node[below=2em of b, inner sep=0pt, label={below:\scriptsize{LTL}}] (phi) {\Large$\varphi$};

  \node[above=0.5em of a,inner sep=0pt] (args) {\texttt{args}};
  \node[below=2em of phi,inner sep=0pt] (types) {\texttt{types}};

  \node[draw, minimum height=7.2em, right=4em of center,fill=white,yshift=0.4em] (frontend) {\rotatebox{90}{Front end}};
  \node[draw, right=4em of center, minimum height=4em, align=center, fill=white, yshift=-5.7em] (ltlfrontend) {\rotatebox{90}{\parbox{4em}{\centering\scriptsize LTL\\front end}}};
  \node[draw, minimum height=8em, right=9em of center,fill=white] (lower) {\rotatebox{90}{\scriptsize Lower to \HOAone}};
  \node[right=15em of a, inner sep=-4pt, label={below:\footnotesize\texttt{autfilt}}] (autfilt) {\rotatebox{45}{\huge\dstechnical}};

  \node at (b -| autfilt) [inner sep=-3pt, label={below:\parbox{4em}{\centering\footnotesize\texttt{autfilt\\[-4pt]-{}-cross}}}] (autfilt-cross) {\rotatebox{45}{\huge\dstechnical}};

  \node at (phi -| autfilt) [inner sep=-3pt, label={below:\footnotesize\texttt{ltl2tgba}}] (ltl2tgba) {\rotatebox{45}{\huge\dstechnical}};

  \node [above=3em of autfilt, inner sep=-3pt, label={below:\footnotesize\texttt{ic3ia}}] (ic3ia) {\rotatebox{45}{\huge\dstechnical}};

  \node at (lower |- ic3ia) [draw] (tovmt) {\rotatebox{90}{\scriptsize To VMT}} ;

  \node[draw, minimum height=12em, right=24em of center,yshift=-2em] (lift) {\rotatebox{90}{\scriptsize Lift to \HOApp}};

  \node[anchor=west] at ($(a -| lift.east)+(0.6,0)$) (aprime) {\large$\mathcal{A}'$} ;
  \node[anchor=west] at ($(b -| lift.east)+(0.6,0)$) (axb) {\large$\mathcal{A}\mathord{\otimes}\mathcal{B}$} ;
  \node[anchor=west] at ($(phi -| lift.east)+(0.6,0)$) (aphi) {\large$\mathcal{A}_{\varphi}$} ;

  \node at (ic3ia -| aprime)[anchor=west] (result) {\scriptsize\parbox{8em}{\texttt{empty}\kern6pt or\\\texttt{not empty}\kern1pt +\kern1pt cex}} ;

  \begin{scope}[->,>=latex]
  \draw (a) -- (a -| frontend.west) ;
  \draw (b) -- (b -| frontend.west) ;

  \draw (a -| frontend.east) -- (a -| lower.west) node [midway] {\scriptsize$\mathcal{A}$} ;
  \draw (b -| frontend.east) -- (b -| lower.west) node [midway] {\scriptsize$\mathcal{B}$} ;
  \draw (a -| lower.east) -- (a -| autfilt.west) node [midway] {\scriptsize$\mathcal{A}\flat$} ;

  \draw (b -| lower.east) -- (autfilt-cross.west) node [midway] {\scriptsize$\mathcal{B}\flat$} ;

  \draw ($(autfilt.east)+(0.03,0)$) -- (a -| lift.west) node [midway] {\scriptsize$\mathcal{A}'\flat$} ;

  \draw (a -| lower.east)-- ++(1.1,0) |- ($(autfilt-cross.west)+(-0,0.15)$) ;

  \draw (b -| autfilt-cross.east) -- (b -| lift.west) node [midway] {\scriptsize$(\mathcal{A}\mathord{\otimes} \mathcal{B})\flat$} ;

  \draw (a -| frontend.east)-- ++(0.8,0) |- (tovmt.west) ;

  \draw (tovmt.east |- ic3ia) -- (ic3ia.west) node[pos=0.4] {\scriptsize$\mathcal{M}, \varphi_{\mathit{Acc}}$};

  \draw (phi) -- (phi -| ltlfrontend.west) ;
  \draw (phi -| ltlfrontend.east) -- (ltl2tgba.west) node [midway] {\scriptsize$\varphi\flat$} ;

  \draw (ltl2tgba) -- (ltl2tgba -| lift.west) node [midway] (aphib) {\scriptsize$\mathcal{A}_\varphi\flat$};

  \node at (ltl2tgba -| aphib) [circle, fill=black, minimum width=4pt, inner sep=0pt] (bullet) {};

  \node at (frontend |- tovmt) (fake1) {};
  \node at (lift |- tovmt) (fake2) {};

  \draw  (autfilt -| lift.east) -- (aprime) ;
  \draw  (autfilt-cross -| lift.east) -- (axb) ;
  \draw  (ltl2tgba -| lift.east) -- (aphi) ;
  \draw  (ic3ia.east) -- (result) node [midway] {\scriptsize$\mathcal{M}\models \neg\varphi_{\mathit{Acc}}?$} ;

  \begin{scope}[on background layer]
  \draw [dashed]
  ($(args.east)+(0.02,0.05)$) -| (autfilt.north);
  \draw [dashed] ($(types.east)+(0.02,0.05)$) -| (bullet) ;
  \draw [dashed] ($(types.east)+(0.02,0.05)$) -| (ltlfrontend.south) ;
  \end{scope}

  \node[draw, rounded corners,fit={(ic3ia)(ltl2tgba)},inner sep=12pt,label={above:\scriptsize\textit{Back ends}}] (backends) {} ;

  \end{scope}
\end{tikzpicture}}
    \caption{Overview of the capabilities of the \texttt{hoapp} tool.}
    \label{fig:tool}
\end{figure}
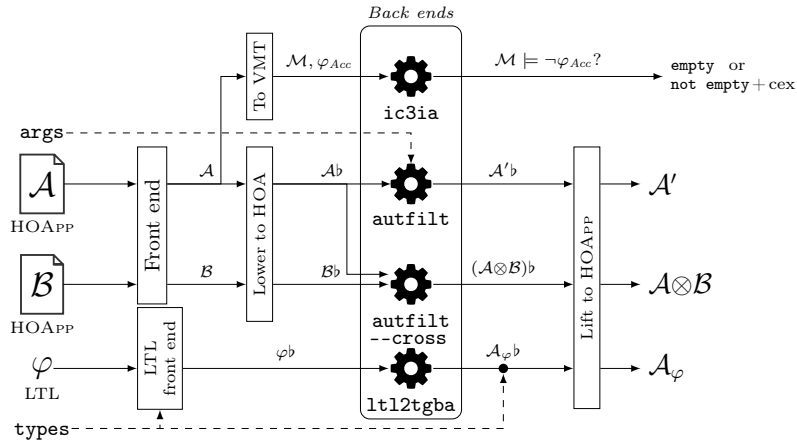

\section{The \texttt{hoapp} tool}\label{sec:tool}

We have built a prototype tool \verb|hoapp| that can parse, type-check, and process \HOApp files
through a command-line interface.
The tool is written in Python and may also be loaded as a package within other programs:
then, these can exploit its API to create and manipulate \HOApp automata programmatically.
Fig.~\ref{fig:tool} gives a graphical overview of the tool's capabilities, which we discuss below.
Support is currently limited to existential-branching automata.\footnote{The tool is open-source: \url{https://github.com/lou1306/hoapp}.}

\smallskip\noindent\emph{Lowering into \HOAone.}
The tool implements an invertible procedure to lower a \HOApp automaton $\mathcal{A}$ into a \HOAone automaton $\mathcal{A}\flat$.
To illustrate this procedure, in Listing~\ref{ex:arbiter-v1} we show the result of lowering Listing~\ref{ex:arbiter} into \HOAone.
The procedure first performs a helper step called \emph{auto-aliasing}.
This step expands all aliases in the automaton,
then defines one new alias for each variable,
and lastly applies these new aliases to the entire automaton.
This ensures consistency for the inverse procedure.
Then, we treat every predicate $p$ over non-Boolean variables in $\mathcal{A}$ as
a fresh atomic proposition in $\mathcal{A}\flat$.
For instance, in in Listing~\ref{ex:arbiter-v1} predicate $x \neq 0$ becomes a proposition named \verb|"x != 0"|.
We also replace obligation assignments by fresh APs, and add them as conjuncts to their edge's guard.
By the end of this step, every edge contains a single predicate and no obligation, and so it is compatible with \HOAone.
When emitting the HOA file,
we remove all aliases generated in the auto-aliasing step.
In turn, we attach the names, types, and controllability of $\mathcal{A}$'s variables under the items \texttt{v1pp-AP}, \texttt{v1pp-AP-type}, and \texttt{v1pp-controllable-AP}, respectively.
We also copy \texttt{assume} and \texttt{guarantee} items, as \texttt{v1pp-assume} and \texttt{v1pp-guarantee},
but we quote the formulas so that they will be treated as strings by \HOAone-compliant tools.
Note that these tools will ignore these non-standard header items, as they start with a lowercase letter.
This additional information makes the transformation invertible. (We cannot recover the original aliases of $\mathcal{A}$, but these do not affect its semantics.)

The inverse, \emph{lifting} procedure replaces the APs of $\mathcal{A}\flat$ by the original predicates;
it also separates assignments from guards, and reconstructs obligations.
We recover the original variables, their types, and any LTL formulas from the respective \texttt{v1pp-} items, and recreate aliases with the same auto-aliasing procedure described earlier.
The lowering enables reusing Spot for several operations on automata.
Spot also implements translations into a number of output formats, including GraphViz, which makes it easy to visualise \HOApp automata.

\smallskip\noindent\emph{Product of automata.}
The tool implements the product operation by lowering the two automata into \HOAone, computing their product via Spot, and lifting the result back into \HOApp.
The procedure treats variables with the same name as identical, regardless of the order in which they appear,
mimicking Spot's behaviour with \HOAone automata. We report an error if there is a mismatch in the types or controllability of variables between the two automata.

\noindent\emph{LTL translation.}
We also provide a proof-of-concept translator from LTL with predicates to B\"uchi automata, which relies on Spot's \texttt{ltl2tgba}.
The translator accepts an LTL formula whose atoms may be \HOApp comparisons over variables,
and a mapping describing the type of each variable.
The formula must use \texttt{@x} to refer to a variable $x$, and \texttt{i/r} prefixes for integer and
real literals. For instance, the formula $G(x = 0) \rightarrow F(y = 1)$ should be given as \verb|G (@x == i0) -> F (@y == i1)| (assuming \texttt{x}, \texttt{y} are declared as integers).
The translation workflow type-checks the formula against the mapping given by the user; then, it quotes all sub-expressions not natively understood by Spot and runs
the resulting formula through \texttt{ltl2tgba}; and finally, it uses the
type mapping to lift the resulting \HOAone automaton into \HOApp.

\noindent\emph{Emptiness checking.}
The tool can check an arbitrary \HOApp automaton for emptiness,
by translating the automaton into the VMT-LIB format~\cite{DBLP:conf/smt/CimattiGT22}.
In this translation, we track visits to an acceptance set $F_i$ by an acceptance bit $\texttt{ACC}_i$ which is set iff a transition belonging to $F_i$ is taken.
This lets us turn the acceptance condition $\mathit{Acc}$ into an LTL formula $\varphi_\textit{Acc}$.
For instance, $\mathrm{Fin}(i)$ and $\mathrm{Inf}(i)$ are encoded as $FG(\neg\texttt{ACC}_i)$ and $GF(\texttt{ACC}_i)$, respectively.
By this translation, we can use the state-of-the-art model checker \textsc{Ic3ia}~\cite{DBLP:conf/cav/CimattiGJRT25} to verify
whether the automaton models $\neg\varphi_\textit{Acc}$. If it does, the automaton is empty. Otherwise, the model checker returns an accepting run of the automaton as a counterexample.

\section{Conclusion}\label{sec:conclusion}

The growing interest of the research community in infinite-state systems
demands effective formalisms for automata over richer-than-Boolean alphabets.
Previous approaches have mostly focused on %
endowing automata with registers or memory cells.
In this paper, we suggested obligations as a simpler construct that increases the expressiveness
of (finite) symbolic $\omega$-automata while retaining some of their simplicity.
This simple mechanism lends itself easily to semantics for
nondeterminism, rich acceptance conditions, and universal branching.
To demonstrate the practicality of the proposal, we also introduced a
dialect of the popular HOA format to define obligation automata over integer and real variables,
as well as a prototype tool to manipulate these automata.

Questions about the decidability of emptiness and other problems for obligation
automata remain open, and their relation to register automata also requires further inquiry.
It is plain to see that we can check emptiness by reduction to model checking,
but in the future we might investigate whether any known
emptiness algorithm for $\omega$-automata may be adapted to obligation automata:
this may provide insights into the complexity of this problem.
We also plan to design a temporal logic with
obligation-like constructs, akin to TSL~\cite{DBLP:conf/cav/Finkbeiner0PS19}
but with arguably simpler semantics and a direct correspondence with (a subset of)
B\"uchi obligation automata.
On the practical side, the format that we proposed is currently limited to a small selection of operators, which we should expand in the future:
as a more general solution, we could extend the syntax to allow arbitrary SMT-LIB terms~\cite{barrett_smt-lib_2010} in the grammar of expressions,
and introduce corresponding header items to specify the theories required by the automaton.
We also intend to investigate applications of \HOApp to formal verification and synthesis.

\bibliographystyle{splncs04}
\bibliography{references}

\clearpage\appendix

\section{Negative obligations}\label{apx:nobl}
\newcommand*{\no}{\overline{o}}
In this appendix, we try and further extend the obligation automata
framework with an extra construct that may be useful for complementation.
We chose to relegate this extension to an appendix due to space reasons,
to make our main treatment of obligation automata easier to follow, and
(on a more practical side) because support for this feature has not yet
been implemented in \HOApp and its companion tool.

We now assume that every edge may
be additionally labelled by a \emph{negative} obligation $\no$.
This is a set of pairs from $V \times \mathcal{T}(V)$,
which places \emph{inequality} constraints on the
next input symbol, as opposed to the equality constraints set by
obligations.
We write a negative obligation as
$\no = \{x_1 \nmapsto t_1, x_2 \nmapsto t_2, \ldots\}$, where the
same variable $x$ may appear any number of times,
and define $\SEM{\no}{v}$ as $\bigwedge_i x_i \neq \SEM{t_i}v$,
or $\textit{true}$ if $\no$ is the empty set.

Then, a sequence of edges $e_1e_2\ldots$
with negative obligations $\no_1\no_2\ldots$
is a run for a word $\mathbf{v}$ iff, in addition to
the definition in Section~\ref{sec:oblaut},
it also satisfies the additional constraint
$v_{i+1} \models \SEM{\no_i}v_i$, for every $i$.
When dealing with products, we simply take the
union of negative obligations: thus,
$(q_1, \gamma_1, o_1, \no_1, q'_1)
\times
(q_2, \gamma_2, o_2, \no_2, q'_2)
= (q_1, q_2) \xrightarrow{\gamma_1\land\gamma_2, o_1 \times o_2, \no_1 \cup \no_2} (q'_1, q'_2)
$. Universal branching is adapted similarly.
We conjecture that Theorems~\ref{thm:intersect} and~\ref{thm:fragment} readily extend to negative obligations.

With negative obligations, we can
recognise the complement of~$\mathcal{L}$
(the language of sequences where $x$ continuously grows by one unit,
recognised by the automaton in Fig.~\ref{fig:sequence2})
without relying on extra variables, for instance via the B\"uchi
automaton in Fig.~\ref{fig:nobl}.
When reading words from~$\mathcal{L}$, this automaton simply loops
in the initial, non-accepting state.
To reach the accepting state, we must at some point read a
valuation $v_i$ such that $x$ is \emph{not} equal to $v_{i-1}(x)+1$;
there is a run where reading $v_{i-1}$ takes the edge
with the negative obligation, and $v_i$ does satisfy
the constrain $v_i(x) \neq v_{i-1}(x) + 1$,
so it is successfully read by the automaton.
From that point on, the automaton may forever stay in 
the accepting sink, and thus recognise that the word is in
$\mathcal{L}^c$. Words from $\mathcal{L}$, instead, keep the
automaton in the non-accepting initial state.

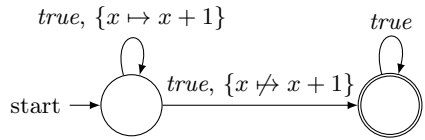
\begin{figure}[b]
\centering
\begin{tikzpicture}[auto, every initial by arrow/.style={>=latex}]
    \node [state, initial] (q0){};
    \node [state, accepting, right=8em of q0] (q1){};
    \begin{scope}[->, >=latex]
        \draw (q0) to[loop above] node[midway] {$\textit{true}$, $\{x \mapsto x+1\}$} (q0);
        \draw (q0) to node[midway] {$\textit{true}$, $\{x \nmapsto x+1\}$} (q1);
        \draw (q1) to[loop above] node[midway] {$\textit{true}$} (q1);
    \end{scope}
\end{tikzpicture}
\caption{B\"uchi (negative-) obligation automaton for $\mathcal{L}^c$.}\label{fig:nobl}
\end{figure}

\begin{figure}[tb]
\centering
\subfloat[A word where $x$ always changes is accepted.]{%
\begin{tikzpicture}[auto, every initial by arrow/.style={>=latex}]

    \node (n1) {};
    \node[below=4em of n1] (n2) {};
    \node[below=4em of n2] (n3) {};
    \node[right=8em of n1] (v1) {\texttt{1}};
    \node[above right= -0.5em and 8em of n2] (v2) {\texttt{2}};
    \node[above right= -0.5em and 8em of n3, label=below:{$\vdots$}] (v3) {\texttt{3}};
    \node[above=1em of v1] (x) {$x$};

    \node [state, initial, above= .5em of n1,initial text=] (q0){$q_0$};
    \node [state, accepting, below=.5em of n1] (q11) {$q_1$};

    \node [state, accepting, below left=.5em and 2em of n2] (q12x) {$q_{1,2}$};
    \node [state, accepting, below right=.5em and 2em of n2] (q12y) {$q_{1,2}$};

    \node [label=below:{$\vdots$},state, accepting, below left=.5em and 4em of n3] (q12xx) {$q_{1,2}$};
    \node [label=below:{$\vdots$},state, accepting, below left=.5em and .5em of n3] (q12xy) {$q_{1,2}$};
    \node [label=below:{$\vdots$},state, accepting, below right=.5em and .5em of n3] (q12yx) {$q_{1,2}$};
    \node [label=below:{$\vdots$},state, accepting, below right=.5em and 4em of n3] (q12yy) {$q_{1,2}$};

    \begin{scope}[->, >=latex]
    \draw (q0) -- (q11) node[midway,left] {$y\mapsto x$};
    \draw (q11) -- (q12x) node[midway,left] {$y \mapsto x$};
    \draw (q11) -- (q12y) node[midway,right] {$y \mapsto y$};

    \draw (q12x) -- (q12xx) node[midway,left] {\scriptsize$y \mapsto x$};
    \draw (q12x) -- (q12xy) node[midway,right,xshift=-2pt] {\scriptsize$y \mapsto y$};
    \draw (q12y) -- (q12yx) node[midway,left,xshift=2pt] {\scriptsize$y \mapsto x$};
    \draw (q12y) -- (q12yy) node[midway,right] {\scriptsize$y \mapsto y$};
    \end{scope}
\end{tikzpicture}}
\hfill
\subfloat[A word where $x$ takes the same value twice is rejected: the right-most branch can never leave $q_3$.]{%
\begin{tikzpicture}[auto, every initial by arrow/.style={>=latex}]

    \node (n1) {};
    \node[below=4em of n1] (n2) {};
    \node[below=4em of n2] (n3) {};
    \node[right=8em of n1] (v1) {\texttt{1}};
    \node[above right= -0.5em and 8em of n2] (v2) {\texttt{2}};
    \node[above right= -0.5em and 8em of n3, label=below:{$\vdots$}] (v3) {\texttt{1}};
    \node[above=1em of v1] (x) {$x$};

    \node [state, initial, above= .5em of n1,initial text=] (q0){$q_0$};
    \node [state, accepting, below=.5em of n1] (q11) {$q_1$};

    \node [state, accepting, below left=.5em and 2em of n2] (q12x) {$q_{1,2}$};
    \node [state, accepting, below right=.5em and 2em of n2] (q12y) {$q_{1,2}$};

    \node [label=below:{$\vdots$},state, accepting, below left=.5em and 4em of n3] (q12xx) {$q_{1,2}$};
    \node [label=below:{$\vdots$},state, accepting, below left=.5em and .5em of n3] (q12xy) {$q_{1,2}$};
    \node [label=below:{$\vdots$},state, below right=.5em and 2em of n3] (q12yx) {$q_{3}$};

    \begin{scope}[->, >=latex]
    \draw (q0) -- (q11) node[midway,left] {$y\mapsto x$};
    \draw (q11) -- (q12x) node[midway,left] {$y \mapsto x$};
    \draw (q11) -- (q12y) node[midway,right] {$y \mapsto y$};

    \draw (q12x) -- (q12xx) node[midway,left] {\scriptsize$y \mapsto x$};
    \draw (q12x) -- (q12xy) node[midway,right,xshift=-2pt] {\scriptsize$y \mapsto y$};
    \draw (q12y) -- (q12yx) node[midway,right,xshift=2pt] {\scriptsize$y=x$};

    \end{scope}
\end{tikzpicture}}
\caption{Sample runs in the automaton from Figure~\ref{fig:nonce}.
Values of $x$ in the word corresponding to each run are shown to the right. Although runs are defined as DAGs, in this figure we
visualise them as trees by duplicating some nodes. We also omit the guard $y\neq x$ from
most edges, to reduce clutter.}\label{fig:nonceruns}
\end{figure}
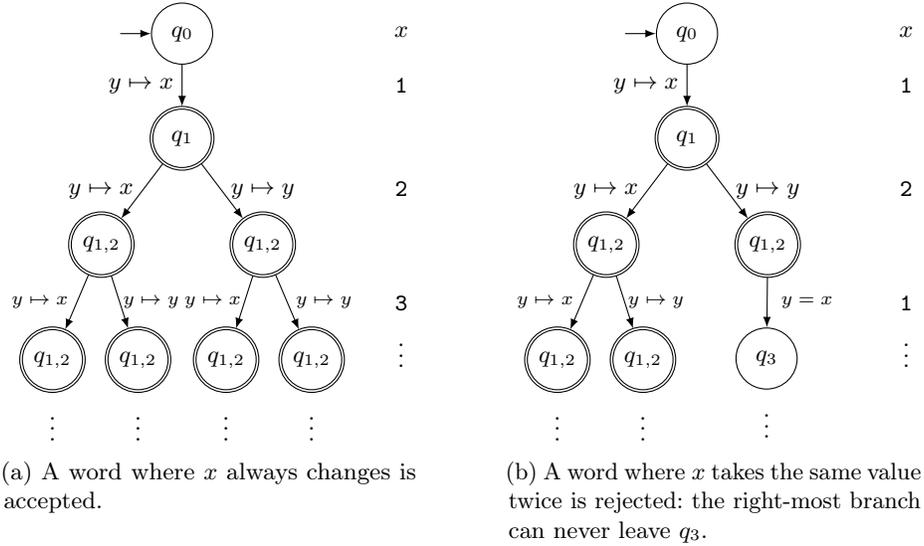

\end{document}